\newcommand{\red}{\color{red}}
\def\thanks#1{\protected@xdef\@thanks{\@thanks
        \protect\footnotetext{#1}}}
\newtheorem{theorem}{Theorem}
\newtheorem{prop}{Proposition}
\theoremstyle{definition}
\newcommand{\normal}{\mathcal{N}}
\newcommand{\adj}{A}
\newcommand{\reals}{\mathbb{R}}
\DeclareMathOperator{\diag}{\mathrm diag}
\newcommand{\one}{\bm{1}}
\newcommand{\ex}[1]{\ensuremath{\mathbb{E}\left[ #1\right]}}
\title{Gaussian Mixture Models for  Stochastic Block Models\\ with Non-Vanishing Noise}
\author{Heather Mathews$^{\star}$  
}
\author{Vaishakhi Mayya$^{\dagger}$}
\author{Alexander Volfovsky$^{\star}$}
\author{Galen Reeves$^{\star\dagger}$\thanks{This work was partially supported by funding from the Laboratory for Analytic Sciences (LAS), the Army Research Institute (ARI) under grant number W911NF1810233 and the NSF under Grant No. 1750362.} }
\affil{$^{\star}$Statistical Science and $^{\dagger}$Electrical and Computer Engineering, Duke University}
\begin{document}

\maketitle
\begin{abstract}
Community detection tasks have received a lot of attention across statistics, machine learning, and information theory with a large body of work concentrating on theoretical guarantees for the stochastic block model. 
One line of recent work has focused on modeling the spectral embedding of a network using Gaussian mixture models (GMMs) in scaling regimes where the ability to detect community memberships improves with the size of the network. However, these regimes are not very realistic. This paper provides tractable methodology motivated by new theoretical results for networks with non-vanishing noise. We present a procedure for community detection using GMMs that incorporates certain truncation and shrinkage effects that arise in the non-vanishing noise regime. We provide empirical validation of this new representation using both simulated and real-world data. 
 \end{abstract}



\section{Introduction}

Network data are of paramount importance across many modern scientific fields \cite{farine2015constructing,dunbar2015structure,stadtfeld2019integration,grinberg2019fake}. One of the most common tasks in network analysis is the search for community structure among units in the network. Much of the statistical \cite{Rohe11, athreya2016limit,Suwan16} and information theoretical  \cite{decelle2011asymptotic,abbe:2018,lelarge:2018,reeves2019} work on community detection 
studies the stochastic block model (SBM) \cite{Holland83}. In this probabilistic network model, the probability of an edge between individuals $i$ and $j$ is governed exclusively by their community memberships, $X_i$, and $X_j$. The complete network can then be represented by its adjacency matrix, $\adj$, where $A_{ij}=A_{ji}=1$ if there is an edge between node $i$ and node $j$ and $0$ otherwise. In this setting, the task of {\it community detection} is to recover the community labels $X = (X_1, \dots, X_n)$ given the adjacency matrix, $\adj$, and possible side information.

\begin{figure}[!h]
\begin{center}
\includegraphics[width= .6\columnwidth,clip=true, trim=0 0 0 0]{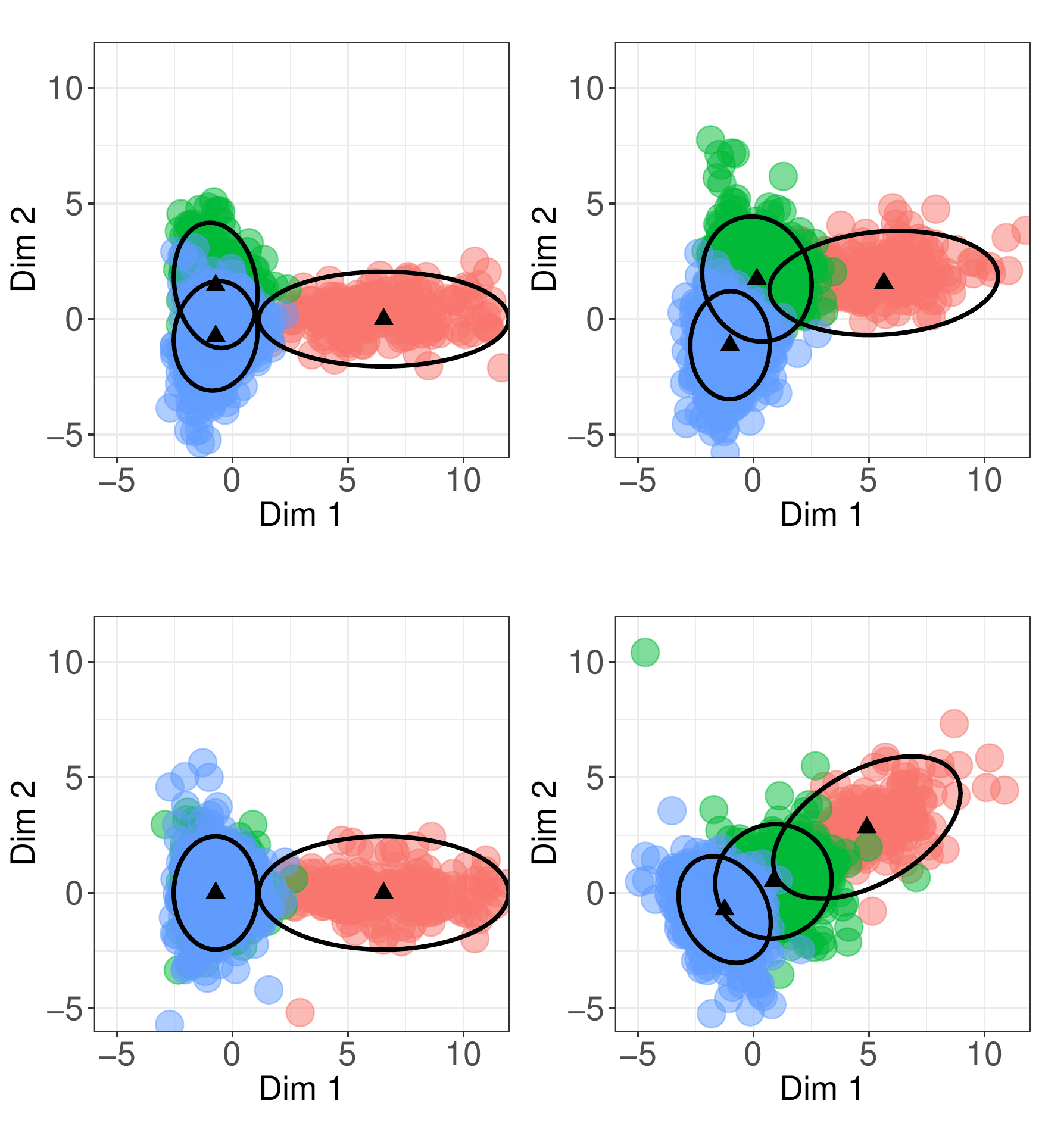}
\caption{Aligned eigenvectors from graphs generated from four different SBMs. The models in the top row have two eigenvalues of $R$ that are greater than one while the bottom row only has one eigenvalue greater than one. In the left column $R$ is diagonal, while in the right column there is a rotation by an orthogonal matrix. Each $\blacktriangle$ and ellipse are based on the mean and covariance of GMM components from Section~\ref{sec:methodmain}.}
\label{fig:rotating_R_gmm_covar}
\end{center}
\end{figure}

A large body of work has considered spectral clustering methods for community detection with early work focusing on the behavior of clustering nodes via $k$-means \cite{Rohe11} and more recent work focusing on Gaussian mixture models (GMMs)~\cite{athreya2016limit,Suwan16}. Much of this work is derived under scaling regimes where the ability to detect community memberships improves with the size of the graph. {\it We call this the vanishing-noise regime}.

In this paper,  we focus on the degree-balanced SBM in which each community has the same expected average degree $d$. In Section~\ref{sec:dbsbm}, we provide a formal problem formulation that describes how all the relevant information can be captured in a $(K-1)$ dimensional embedding of the eigenvectors of the adjacency matrix. We then develop a novel GMM representation for mid- to high-noise regimes that is able to appropriately quantify the uncertainty about the labels of the individual nodes. In Figure~\ref{fig:rotating_R_gmm_covar}, we display our proposed eigenvector embedding and the mean and covariance matrices we derive in Section~\ref{sec:methodmain} for our GMM across models with both high- (lower row of plots) and low-noise (upper row of plots) levels. Section~\ref{sec:sims} provides an empirical validation of our method and a comparison to other state-of-the-art algorithms.

\section{Problem Formulation}\label{sec:dbsbm}

Our approach is described in the context of a general latent space model for networks that includes the degree balanced SBM as a special case. Let $X = (X_1, \dots, X_n)$ be an $n \times s$ matrix of latent variables whose rows are drawn independently from a distribution $P$ on $\reals^s$ with mean zero, identity  covariance, and bounded support.  Conditional on   $X$, the entries of the adjacency matrix $A = (A_{ij})$ are drawn independently according to
\begin{align}
    A_{ij} \sim \text{Bern}\left( \frac{d}{n}  + \frac{\sigma}{n} X_i^T R X_j \right), \quad i < j, \label{eq:A}
\end{align}
where: $d$ is a positive number that parameterizes the expected degree of each node in the network, $R$ is a symmetric $s \times s$ matrix that describes how the probability of an edge depends on the latent variables, and $ \sigma = \sqrt{d (n-d)/n}$  is a scaling factor that ensures that the signal-to-noise ratio is invariant to the choice of $d$. The tuple $(n,P,d,R)$ is valid only if $d/n  + (\sigma/n) x^T R \tilde{x}$ is between zero and one for all $x,\tilde{x}$ in the support of $P$. 

The assumption that $P$ has zero mean  ensures that the model is degree-balanced in expectation. Specifically, $\ex{A_{ij} \mid X_i} = d/n$, and thus the expected degree of node $i$ is independent of $X_i$. The assumption that $P$ has identity covariance is without loss of generality since any linear transformation of the latent variables $X_i$ can be absorbed into the model parameter $R$. 

When the support of $P$ is finite, this model reduces to the degreee balanced SBM where each node is assigned to exactly one of $K$ possible communities, independently with probability vector $p  = (p_1, \dots, p_K)$. Specifically, we let $X_i$ be supported on a set of  $K$ points $\{\mu_1, \dots, \mu_K\}$ in dimension $s = K-1$ satisfying the moments constraints:
\begin{align}
\sum_{k=1}^K p_k \mu_k = 0, \qquad \sum_{k=1}^K p_k \mu_k \mu_{k}^T  = I. \label{eq:mu_white}
\end{align}
We leverage a unique specification of $P$ given in \cite[Remark 1]{reeves2019} as a function of label probabilities $p$.  

We note that an alternative representation for the labels, used in previous work~\cite{Rohe11}, is to associate the $k$-th label with the $k$-th standard basis vector in $\reals^K$. An explicit mapping between these various representations is provided in \cite{reeves2019}.

\section{Theory and Methodolgy}\label{sec:methodmain}

In this section, we present a general method for inference in degree-balanced networks. This method is based on a Gaussian approximation for the spectral embedding of the adjacency matrix. It is well known that the leading eigenvector of the adjacency matrix is correlated with the degree of the nodes and thus does not provide any information about the memberships. Therefore, we consider the spectral embedding of the normalized adjacency matrix $A - (d/n)\one \one^T$.  The projection of this matrix onto the space of rank-$s$ matrices can be expressed as $V\Lambda V^T$  where $\Lambda = \diag(\lambda_1, \dots, \lambda_s)$ contains the largest  eigenvalues (in magnitude), in decreasing order,  and  $V = (V_1, \dots, V_n)^T$ is an $n \times s$ matrix with orthonormal columns corresponding to the eigenvectors. We note that the representation of the eigenvectors is not unique. 

\subsection{Gaussian Approximation of Eigenvectors}

In the context of community detection, the basic principle underlying spectral clustering is that the rows of the leading eigenvectors are correlated with the latent variables. Applying standard clustering techniques, such as $k$-means, directly on the points $V_1, \dots, V_n \in \reals^s$ provides a partition of the nodes in the network and can be used to estimate community memberships. {\it More generally, a principled approach to inference is to formulate a joint model for the eigenvectors and the latent parameters}. This applies in the general latent space model as well as in the specialized case of community detection where the parameter space is finite. This paper builds upon recent work \cite{Suwan16,athreya2016limit,reeves2019}, which provides both theoretical and empirical support for the use of GMMs.

To describe our approach,  we introduce the scaled eigenvectors $Y_1, \dots, Y_n \in \reals^s$ according to
\begin{align}
Y_i =  \sqrt{n} \diag(r_1, \dots, r_s)V_i, \label{eq:scaled_vecs}
\end{align}
where $r_1 \ge r_2 \ge \dots r_s$ are the eigenvaues of $R$. The results in Athreya et al.~\cite[Theorem~4.8]{athreya2016limit} can be used to characterize the asymptotic distribution of $Y_i$ in the vanishing-noise regime where $P$ is fixed while the model parameters $(n,d,R)$ scale to infinity.  Adapted to the setting of this paper, this result suggests the following approximation:
\begin{align}
    U Y_i \sim \normal\left( R X_i, \tilde{\Sigma}(X_i) \right), \label{eq:Gaussian_approx_athreya}
\end{align}
where $U$ is an orthogonal matrix that aligns the eigenvectors with the latent variables and the covariance is given by
\begin{align}
\tilde{\Sigma}(x) &  =\mathbb{E}_{X_0\sim P}\left[ \nu(x, X_0)   X_0 X_0^T\right]\\
\nu(x,\tilde{x})& = 1 +    \left( \frac{n  - 2 d}{n \sigma } \right) x^T R \tilde{x}   - \frac{1}{n} \left( x^T R \tilde{x} \right)^2.
\end{align}
The matrix $U$ depends on the eigenspace of $R$ as well as the particular choice of eigenvectors used in the eigendecomposition of $A$.  
In the proposed method described below, this matrix is estimated from the data.

It is important to emphasize that the approximation in \eqref{eq:Gaussian_approx_athreya} is adapted from the vanishing-noise regime where the eigenvalues of $R$ scale with $n$. As a consequence, some important aspects of the  moderate to high noise regimes are not captured. In particular, it is well known that an eigenvector is uninformative about latent structure unless its associated eigenvalue exceeds a threshold.

We propose a Gaussian approximation for the scaled eigenvectors $Y_i$ that {\it incorporates both truncation and shrinkage effects via the mean and variance of $Y_i$}. Let $\bar{R}$ and $\underline{R}$ be the symmetric $s \times s$ matrices obtained by applying the mappings $r \mapsto \max(|r|,1)$  and $r \mapsto \min(|r|,1)$, respectively, to the eigenvalues of $R$. Our approximation is given by
\begin{align}
    U Y_i \sim \normal\left( (\bar{R}^2 - I)^{1/2} X_i, \Sigma(X_i) \right), \label{eq:Gaussian_approx}
\end{align}
where $U$ is an orthogonal matrix and the covariance is
\begin{align}
    \Sigma(x) & = (I \! -\! \bar{R}^{-2})^{-1/2} \tilde{\Sigma}(x) (I\!  -\! \bar{R}^{-2})^{-1/2}  + \bar{R}^{-1} \underline{R}^2 \bar{R}^{-1}. \label{eq:Sigma_x}
\end{align}
The term $(\bar{R}^2 - I)^{1/2}$ provides the truncation and shrinkage to the eigenvalues of $R$. 
Note that any direction corresponding to an eigenvalue of magnitude less than one does not provide any information about $X_i$.

Our approximation follows from  a leave-one-out argument combined with asymptotic properties of spiked Wigner matrices \cite{Georges2012}. 

\begin{prop}\label{prop:Sigma_x}
The matrices $\Sigma(x)$ and $\tilde{\Sigma}(x)$ satisfy
\begin{align}
   \Sigma(x) &= \tilde{\Sigma}(x)  + O(\lambda_\mathrm{min}^{-2}(R)) \label{eq:Sig_bigR}\\
      \Sigma(x)& = \underline{R}^2   + O(\sigma^{-1}   + n^{-1} ) \label{eq:Sig_bigd}\\
    \tilde{\Sigma}(x)& = \underline{R}^2   + O(\sigma^{-1}   + n^{-1} ).
\end{align}
\end{prop}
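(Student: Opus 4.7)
The overall strategy is to diagonalize in an orthonormal eigenbasis of $R$. In this basis, $R$, $\bar R$, $\underline R$, and every matrix built from them are simultaneously diagonal, so each of the three matrix identities reduces to a collection of scalar identities indexed by the eigenvalues $r_1,\ldots,r_s$ of $R$. What remains is then scalar Taylor expansion together with the moment bounds implied by the bounded support of $P$.

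I would first prove the third identity since it is the cleanest. Expanding $\nu(x,X_0)$ and using linearity,
\begin{align*}
\tilde\Sigma(x) = \mathbb{E}[X_0 X_0^T] + \frac{n - 2d}{n\sigma}\,\mathbb{E}\!\left[(x^T R X_0)\,X_0 X_0^T\right] - \frac{1}{n}\,\mathbb{E}\!\left[(x^T R X_0)^2\,X_0 X_0^T\right].
\end{align*}
The first term equals $I$ by the identity-covariance assumption on $P$, and this also equals $\underline R^2$ in the regime where every $|r_i|\ge 1$. The remaining two terms are bounded in operator norm by $C_1 \|R\|/\sigma$ and $C_2 \|R\|^2/n$, respectively, where $C_1,C_2$ depend only on low-order moments of $P$ evaluated against the fixed vector $x$ and are finite by the bounded-support hypothesis. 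This establishes the third identity.

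Next I would handle the first identity, assuming $\lambda_\mathrm{min}(R)$ is large enough that $|r_i|>1$ for all $i$. Then $\bar R = |R|$, $\underline R = I$, and $\bar R^{-1}\underline R^2 \bar R^{-1} = R^{-2}$ has all entries $O(\lambda_\mathrm{min}^{-2}(R))$. A scalar expansion gives $(1 - 1/r_i^2)^{-1/2} = 1 + O(r_i^{-2})$, so $(I - \bar R^{-2})^{-1/2} = I + E$ with $\|E\|=O(\lambda_\mathrm{min}^{-2}(R))$. The third identity bounds $\|\tilde\Sigma(x)\|$ uniformly, so expanding $(I+E)\tilde\Sigma(x)(I+E) + R^{-2}$ and collecting the terms linear and quadratic in $E$ together with $R^{-2}$ yields $\tilde\Sigma(x) + O(\lambda_\mathrm{min}^{-2}(R))$.

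Finally, the second identity follows by substituting the third identity into the structural formula for $\Sigma(x)$ and reading off diagonal entries: the leading part is $\bar r_i^2/(\bar r_i^2 - 1) + \underline r_i^2/\bar r_i^2$, and when $|r_i|\ge 1$ this equals $1 + O(r_i^{-2}) = \underline r_i^2 + O(\lambda_\mathrm{min}^{-2}(R))$, with a remainder of order $\sigma^{-1}+n^{-1}$ inherited from $\tilde\Sigma(x)-\underline R^2$. The main obstacle is the asymptotic bookkeeping: one must state the regime so that the $O(r_i^{-2})$ piece is absorbed into $O(\sigma^{-1}+n^{-1})$ (tacitly $\lambda_\mathrm{min}^2(R)$ grows at least as fast as $\sigma$ and $n$), and one must avoid the boundary $|r_i|\to 1^+$ where the whitening factor $(I - \bar R^{-2})^{-1/2}$ becomes ill-conditioned. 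Apart from this, every step is a direct diagonal computation plus a bounded-moment estimate.
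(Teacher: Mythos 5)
Your route---diagonalize in the eigenbasis of $R$, expand $\nu(x,X_0)=1+O(\sigma^{-1}+n^{-1})$ termwise, and invoke the identity covariance and bounded support of $P$---is exactly the route the paper intends; its own ``proof'' is only a two-line sketch saying precisely this, so your write-up is a faithful (and more careful) elaboration of the same argument. Two of the three identities go through as you describe, modulo one caveat: the constant in $\tilde\Sigma(x)=I+O(\sigma^{-1}+n^{-1})$ depends on $\|R\|$ through the terms $\|R\|/\sigma$ and $\|R\|^2/n$, so $\|\tilde\Sigma(x)\|$ is \emph{not} uniformly bounded as $R$ grows; in your proof of \eqref{eq:Sig_bigR} the error $E\tilde\Sigma(x)+\tilde\Sigma(x)E+E\tilde\Sigma(x)E$ is therefore $O(\lambda_{\min}^{-2}(R)\,(1+\|R\|/\sigma+\|R\|^2/n))$ rather than $O(\lambda_{\min}^{-2}(R))$ outright, and you should either track this or state the regime in which it is absorbed. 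Likewise, the identification $I=\underline{R}^2$ that you use in the third identity silently assumes every $|r_i|\ge 1$, which is not a stated hypothesis.

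The genuine gap is in your treatment of \eqref{eq:Sig_bigd}. Your own computation shows that, with \eqref{eq:Sigma_x} as written, the $i$-th diagonal entry of $\Sigma(x)$ converges (as $\sigma,n\to\infty$ with $R$ fixed) to $\bar r_i^2/(\bar r_i^2-1)+\underline r_i^2/\bar r_i^2$, which differs from $\underline r_i^2=1$ by the constant $1/(r_i^2-1)+1/r_i^2$. You patch this by tacitly requiring $\lambda_{\min}^2(R)\gtrsim \sigma\wedge n$, but that assumption is not in the proposition and defeats its purpose: \eqref{eq:Sig_bigd} is meant to be the complementary regime to \eqref{eq:Sig_bigR}, holding for \emph{fixed} $R$ as $d$ or $n-d$ grows (this is how the paper uses it to connect with the common-covariance GMM of the information-theoretic analysis). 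Under your extra hypothesis the claim becomes a corollary of the other two identities and says nothing new. What your bookkeeping has actually uncovered is that the exponent in \eqref{eq:Sigma_x} must be $+1/2$ rather than $-1/2$: with $(I-\bar R^{-2})^{1/2}\tilde\Sigma(x)(I-\bar R^{-2})^{1/2}+\bar R^{-1}\underline R^2\bar R^{-1}$ the limiting diagonal entry is $(1-\bar r_i^{-2})\underline r_i^{\,0}\cdot 1+\underline r_i^2/\bar r_i^2=\underline r_i^2$ exactly in both cases $|r_i|\ge 1$ and $|r_i|<1$, the whitening factor is no longer singular at $\bar r_i=1$, and all three identities hold as stated without auxiliary growth conditions on $R$. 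You should either prove the proposition for that corrected formula or flag that \eqref{eq:Sig_bigd} is false for the formula as printed; as it stands your argument proves a weaker statement than the one claimed.
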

\begin{proof}[Proof Sketch]
These results follow straightforwardly from the assumption that $P$ has bounded support and the fact that $\nu(x,\tilde{x}) = 1  + O(\sigma^{-1} + n^{-1})$. 
\end{proof}

Proposition~\ref{prop:Sigma_x} shows that as $R$ increases, the approximation in \eqref{eq:Gaussian_approx} converges to the vanishing-noise approximation given in \eqref{eq:Gaussian_approx_athreya}. Proposition~\ref{prop:Sigma_x} also  shows that if either $d$ or $n-d$ increase with $n$,  then the covariance does not depend on $X_i$, and is given by $\underline{R}^2$. Interestingly, this result establishes a connection between the analysis of spectral methods and the information-theoretic analysis of dense networks given in~\cite{reeves2019}, which also involves a GMM with common covariance across the mixtures.

\subsection{Proposed Method}
\label{sec:method}

Let $\mathbb{P}_0(x,z)$ be the distribution of the pair $(X_0, Z_0)$ where $X_0 \sim P$ and $Z_0$ is conditionally Gaussian:
\begin{align}
 Z_0  \sim \normal\left( (\bar{R}^2 - I) X_0, \Sigma(X_0) \right), \label{eq:Z0}
\end{align}
and $\Sigma(x)$ is defined by \eqref{eq:Sigma_x}. 
Our method has three components:
\begin{enumerate}
\item (Spectral embedding) Let $V$ be the eigenvectors of the rank-$s$ projection of the normalized adjaceny matrix $A - (d/n) \one \one^T$ 
and let $Y_1, \dots, Y_n$ be given by \eqref{eq:scaled_vecs}.

\item (Alignment  via maximum likelihood) Let $U^*$ be a solution to the optimization problem 
\begin{align}
  \max_{U} \prod_{i=1}^n \mathbb{P}_0(U Y_i), \label{eq:max}
\end{align}
where the maximum is over all orthogonal matrices $U$ and $\mathbb{P}_0(z)$ is the marginal of $\mathbb{P}_0(x,z)$ with respect to $x$. 
For an SBM, this is the marginal over a GMM.

\item (Classification)
For $i=1, \dots, n$ output the the posterior $\mathbb{P}_0(x \mid z)$ evaluated on the rotated data $z =(U^* Y_i)$. For an SBM, the posterior is represented by the probability vector $\hat{p}_i = (\hat{p}_{i1}, \dots, \hat{p}_{iK})$.
\end{enumerate}

Besides the eigenvalue decomposition, the potentially computationally challenging step in our method is the optimization with respect to an orthogonal matrix. For convenience, this optimization can be carried out over a restricted set of orthogonal matrices belonging to the set $\{ U \, : \, U \diag(r_1, \dots, r_s) U^T = R\}$. In the simulations that follow, we obtain an approximate solution by searching over a set of representative orthogonal matrices.

\section{Experimental Results}\label{sec:sims}
In this section, we study the behavior of the degree-balanced SBM, parameterized by $(n,p,d,R)$ from Section~\ref{sec:dbsbm}.

\subsection{Numerical Simulations}
We generate a network of $n = 5000$ nodes and $K=3$ communities with probability vector $p = (0.1,0.3,0.6)$. The support of $P$ is defined according to  \cite[Remark 1]{reeves2019}, which yields
\begin{align*}
\mu_1 = \begin{pmatrix} 3 \\ 0 \end{pmatrix}, \quad \mu_2 =  \begin{pmatrix} -1/3 \\2 \sqrt{5}/3 \end{pmatrix}, \quad \mu_2 =   \begin{pmatrix} - 1/3 \\  - \sqrt{5}/3\end{pmatrix}.
\end{align*}
The adjacency matrix is generated according to \eqref{eq:A} with average degree $d = 15$ and 
\begin{align*}
R = U \diag(r_1, r_2) U^T, \qquad U = \frac{1}{\sqrt{2}} \begin{pmatrix}1 & 1\\ 1 & - 1 \end{pmatrix}.
\end{align*}

\begin{figure}[h]
\centering\includegraphics[width=.6\columnwidth]{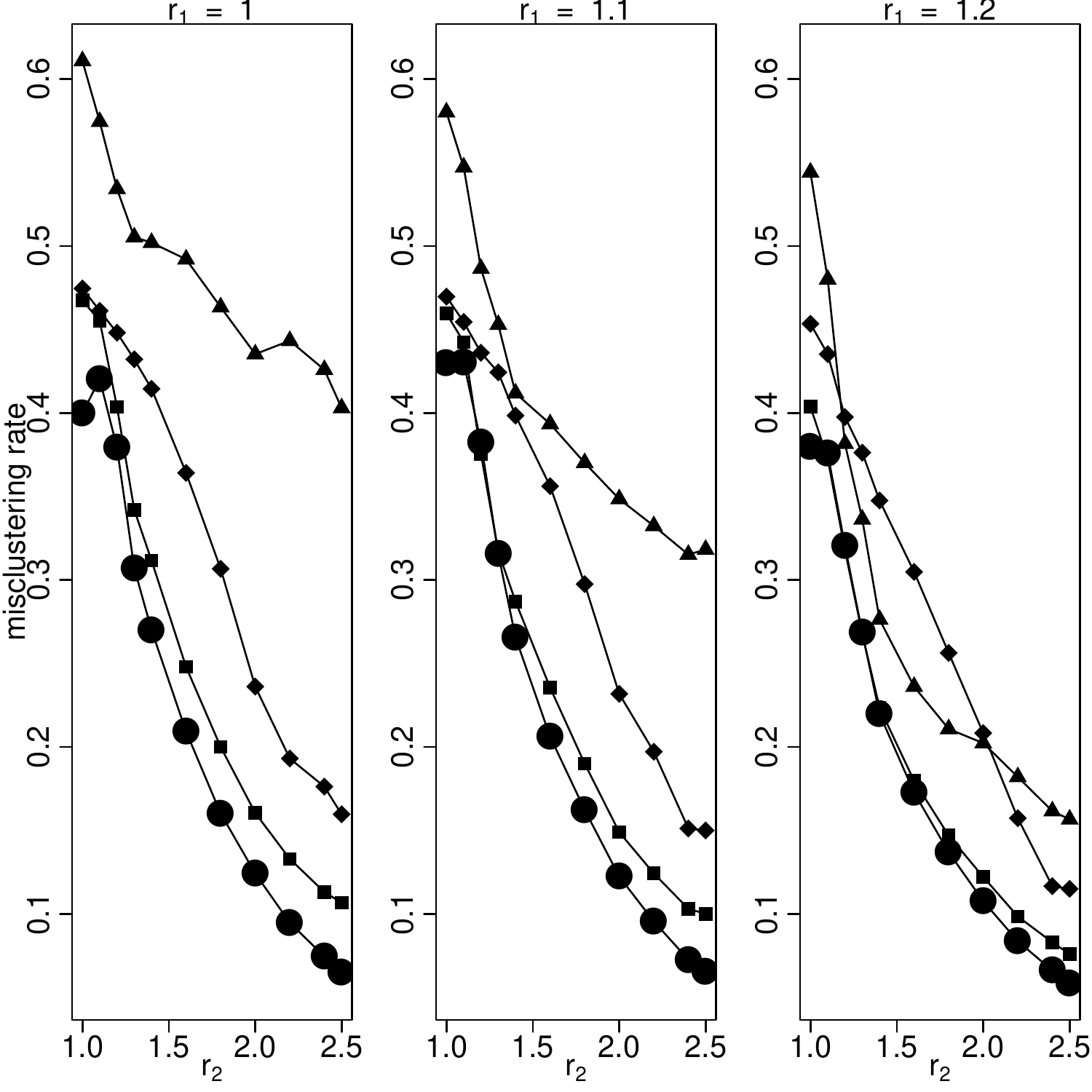}
\caption{Misclustering rate across varying regimes. Each point is the average misclustering rate over $100$ independent networks. Methods are given by:  GMM: $\CIRCLE$, Low-Noise GMM: $\blacksquare$, Uninformed GMM: $\blacklozenge$, $K$-Means: $\blacktriangle$.
\label{fig:justAMisclust}}
\end{figure}

 We compare four different methods: 
\begin{itemize}
\item (GMM) The method described in Section~\ref{sec:method}.

\item (Low-Noise GMM) This is the version of the method described in Section~\ref{sec:method} where $\mathbb{P}_0(x,z)$ corresponds to the low-noise approximation in \eqref{eq:Gaussian_approx_athreya}.

\item (Uninformed GMM) This method fits a GMM to the
the rows of the eigenvectors associated with the $K-1$ largest eigenvalues (in magnitude) of  $A - (d/n) \one \one^T$,
\item ($K$-Means) This method is applied to the same selected eigenvectors as the uniformed GMM.

\end{itemize}
We study the regimes where $(\tilde r_1,\tilde r_2)\in\{1,1.1,1.2\}\times\{1,\dots,2.6\}$ and $(r_1,r_2) = \textrm{sort}(\tilde r_1,\tilde r_2)$.
Performance is assessed using the misclustering rate. For the GMM methods, we use the maximum a posteriori estimate of the community memberships. 
Following the usual convention in the literature, this metric is optimized over permutations of the estimated labels, to mitigate the effects of label switching. 

Figure \ref{fig:justAMisclust} shows the misclustering rate across the different regimes. Each point in the Figure is the average misclustering rate over $100$ networks generated for each set of parameters. As $r_1$ and $r_2$ increase, we see an improvement across all methods. When the eigenvalues of $R$ are close to one, there is almost no correlation between the eigenvectors of the adjacency matrix and the community structure and so all methods perform poorly. Two interesting phenomena can be observed in the figure: first, the relative advantage of our proposed approach to other approaches appears to increase as the larger eigenvalue of $R$ grows (when $r_2=2.5$ we have a nearly 50\% improvement over the next competing method). Second, as the eigenvalues of $R$ grow, the performance of $K$-Means improves and surpasses that of the uninformed GMM.  

\subsection{Real-World Data Analysis}
In this section, we apply our method to an email network from a large European research institution \cite{yin:2017,leskovec2007}. In these data, an undirected edge exists between person $i$ and person $j$ if either one or both had sent an email to the other. Only communication between individuals within the institution is considered and each person belongs to one of $42$ known departments which can be treated as ground truth communities. The smallest 40 communities are combined, yielding a total of $3$ ground truth communities.
This leads to an approximate degree balanced network with the average in each community being around 30. 
We provide two types of analysis for this data: one based on oracle model parameters and one based on estimated ones. Consider that an oracle provides us with the true values of $R$ and $p$ (which we can compute based on our ground truth community information). The misclustering rate of our approach with oracle $R$ and $p$ is $0.2$.

In practice, since one rarely has access to true $R$ and $p$ values, we also estimate $\hat R$ and $\hat p$ based on a $10\%$ sample from the true communities. Using the estimated values of $p$ and $R$ we apply the method of Section~\ref{sec:methodmain} and achieve a misclustering rate of $0.3$ where we have accounted for using $10\%$ of the data to learn the model parameters. These results compare favorably to the performance of $K$-means which achieves a misclustering rate of $0.368$.

\section{Conclusion}
 In this paper, we propose a Gaussian mixture model representation of a projection of the adjacency matrix that can be leveraged for community detection in the {\it non-vanishing noise regime}. In contrast to prior work, we use a model for the joint distribution between the eigenvectors and the latent community structure that includes truncation and shrinkage effects. We demonstrate empirically that this novel representation is able to improve on the performance of community detection in moderate to high noise regimes.
 
 For future directions, it would be interesting to see if these empirical results can be proven rigorously and extended to regularized spectral methods based on the graph Laplacian \cite{zhang2018understanding} or other data-driven techniques \cite{zhang:2016}. This method can further be extended to the settings of multiple observed networks on the same set of units \cite{mayya2019multiple}.

\bibliographystyle{IEEEtran}

\bibliography{ICML_draft}

\end{document}